\newtheorem{lem}{Lemma}
\newtheorem{rem}{Remark}
\title{\LARGE \bf 3D Mobile Localization Using Distance-only Measurements}
\author{Bomin Jiang, Brian D.O. Anderson and Hatem Hmam
\thanks {*This work is supported by National ICT Australia, which is funded by the Australian Research Council through the ICT Centre of Excellence program. Brian D.O. Anderson is also supported by Australian Research Council under grant DP160104500.}
\thanks{Bomin Jiang is with Institute for Data, Systems, and Society, Massachusetts Institute of Technology, Cambridge, MA, USA. Brian D. O. Anderson is with Hangzhou Dianzi University, China, Data-61 CSIRO and Research School of Engineering, The Australian National University, Canberra, Australia. Hatem Hman is with Defence Science and Technology Group, Edinburgh, Australia. }
\thanks{\tt\small{bominj@mit.edu, Brian.Anderson@anu.edu.au, Hatem.Hmam@dst.defence.gov.au.}}
}
\begin{document}
\maketitle
\thispagestyle{empty}
\pagestyle{empty}

\begin{abstract}
For a group of cooperating UAVs, localizing each other is often a key task. This paper studies the localization problem for a group of UAVs flying in 3D space with very limited information, i.e., when noisy distance measurements are the only type of inter-agent sensing that is available, and when only one UAV knows a global coordinate basis, the others being GPS-denied. Initially for a two-agent problem, but easily generalized to some multi-agent problems, constraints are established on the minimum number of required distance measurements required to achieve the localization. The paper also proposes an algorithm based on semidefinite programming (SDP), followed by maximum likelihood estimation using a gradient descent initialized from the SDP calculation. The efficacy of the algorithm is verified with experimental noisy flight data. 
\end{abstract}

\section{introduction}

Research on multi-agent systems has been attracting increasing interest
in recent years. This is due to the potential applications of this
research in a variety of fields, including sensor networks \cite{mao2007wireless},
distributed power grids \cite{nagata2002multi}, distributed computation
and so on. For certain multi-agent systems, such as a cooperating
group of unmanned airborne vehicles (UAVs), there are some tasks,
e.g. formation shape control \cite{jiang2015simultaneous}, state
consensus \cite{cao2008reaching}, or self-localization \cite{mao2007wireless},
which often need to be performed in a distributed way and using limited
sensing.

A common task for UAV formations is that of localizing a target, such as a radio-frequency emitter. A prerequisite for this is for all agents to know their positions in a common (global) coordinate basis, since otherwise vehicles cannot sensibly cooperate by sharing measurements from the emitter in order to localize it. However, not all vehicles may have access to GPS, depending on their location or possible jamming, but if one vehicle does have such access, then it can seek to localize the other vehicles
in the same formation (allowing subsequently all vehicles to localize the target). Those other vehicles, while not GPS-equipped, are assumed to be equipped with inertial navigation sensing, which means that each views its position and motion in a local coordinate frame whose orientation and relative to the global frame are unknown. The book chapter \cite{chapter9} provides a high level introduction to localisation in a 3D environment, including limited consideration of mobility. While this can be achieved in principle
if vehicles have sensors delivering relative positions (range and bearing)
of their neighbors, and an ability to communicate with their neighbors,
distance-only sensing is preferred in many scenarios \cite{jiang2015simultaneous,cao2011formation}
due to its reliability, low cost, low power and light weight sensors. Therefore,
a method that can localize non-GPS-equipped UAVs 
\textit{using distance-only
measurements} (and INS data from non-GPS-equipped vehicles) is of great appeal.  Providing such a method is the motivation for this paper. Additionally, we seek  a method which degrades gracefully, rather than collapses, in the presence of increasing noise.

For the sake of completeness, we remark that there are a number of works using very different approaches in the utilization of distance-only measurements in multi-agent systems to achieve localization of vehicles whose global coordinates are not measured by GPS; most of them require the vehicles, playing the role of mobile sensors, to move in certain \textit{standard trajectory patterns} \cite{jiang2013translational}.
The trajectories are adopted simply for the purpose of achieving self localization in a global coordinate basis
of each agent, and so may conflict with achieving formation objectives
such as surveillance of a target. 
Fundamentally too, this setting, though presented in many papers, is likely to be energy inefficient.
In another direction again, some work in the robotics literature studies
the problem of determining relative position and orientation between
two robots \cite{trawny20073d} given distance measurements between them. The paper \cite{trawny20073d} does
some pioneering work in this area and investigates this problem in
3D, with interagent distances assumed to be measurable with high though
not perfect accuracy. However, this method is only applicable to agent
pairs, not to multi-agent formations. Further, noise robustness is not high. 
In our paper, we obtain location information, as long as orientation information of the local coordinates using distance-only measurements, much like localizing a rigid body \cite{chepuri2014rigid}. 

On the algorithm side, there are some previous work looking at localization problems using SDP, see e.g. \cite{shamsi2010sensor}. However, those formulations cannot be adopted in the case of distance-only measurements. In this case, the relative position vector must be squared to reveal the distance-only measurements, and then, in the noisy case, the associated least-squares formulation will lift the order of polynomials to 4. Our paper gives a SDP formulation specifically used in the distance-only measurements case that remains of order 2. 

This paper provides an implementation of a localization
algorithm given noisy distance measurements combining semidefinite
programming (SDP) and Maximum Likelihood Estimation (MLE),
which is applicable in both a two-agent situation and a number of multi-agent
situations. 
The two-agent situation is particularly important because, in a multi-agent setting, it is often desired that only one agent emitting signals is enough for the localization of all other agents in the network. Such a scenario gives a star topology, which is trivial to treat once we obtain solutions in the two-agent situation. However, we will look at generic multi-agent cases in the future. 
Compared to other similar research, in our proposed method,
1. Agents are not required to move in any designated pattern; 2.
Only one GPS equipped agent (which must be moving) is required; 3.
Only distance measurements between sensors are used, and 4. real noisy data verifies the
robustness of the algorithm in noise. In short, this paper provides
a practical SDP formulation to achieve localization using distance-only measurements. 

The rest of this paper is organized as follows: 

Section II poses, in a mathematical framework, the problem of localizing in 3D space an agent without GPS using distance measurements to a second GPS-equipped agent. It also discusses the minimal number of measurements required via the theory of graph rigidity. Section III is the most comprehensive section of the paper; it explains how the problem can be solved, using several major steps. These steps include semidefinite programming (SDP), which gives an initial solution, and determination of a maximum likelihood estimate via a gradient algorithm initialized by the SDP solution. Issues of gradient flow on a manifold are also discussed. In addition, the usage of the proposed algorithm in limited multi-agent cases is also discussed. Simulations and separately results from flight testing data are presented in Section IV. Concluding
remarks are provided in Section VI.

\section{The 3D localization problem and number of measurements required}

In this section, we first introduce the general
localization problem for an agent pair. Then we use graph rigidity theory to determine a minimal number of measurements required to achieve localization. 

\subsection{Problem formulation}
The problem of a pair of agents localizing each
other is the most elementary version of the larger scale multi-agent
localization problem. Consider two agents in 3D space, as shown in
Figure \ref{fig1-1}. Agent 0 is the reference agent, whose position
in the global coordinate system is denoted by $p_{0}=[x_{0},y_{0},z_{0}]^{\top}$.
Agent 1 is the agent to be localized, whose position in its own local
coordinate system (almost certainly derived from an inertial navigation system) is denoted by $p_{1}=[x_{1},y_{1},z_{1}]^{\top}$.
Suppose further that a coordinate system transformation, including
a $3\times3$ rotation matrix $R$ and a $3\times1$ translation vector
$T$, can transfer agent 1's local coordinates into global coordinates,
so that the position of agent 1 in the global coordinate system is 
$Rp_{1}+T$. The matrix $R$ and vector $T$ are assumed to be constant
(and unknown) through the interval over which measurements are taken, i.e. the cumulative drift in the INS system over the time interval of measurements is negligible. Equivalently, the time interval is sufficiently short. 

\begin{figure}
\begin{centering}
\includegraphics[width=0.42\textwidth]{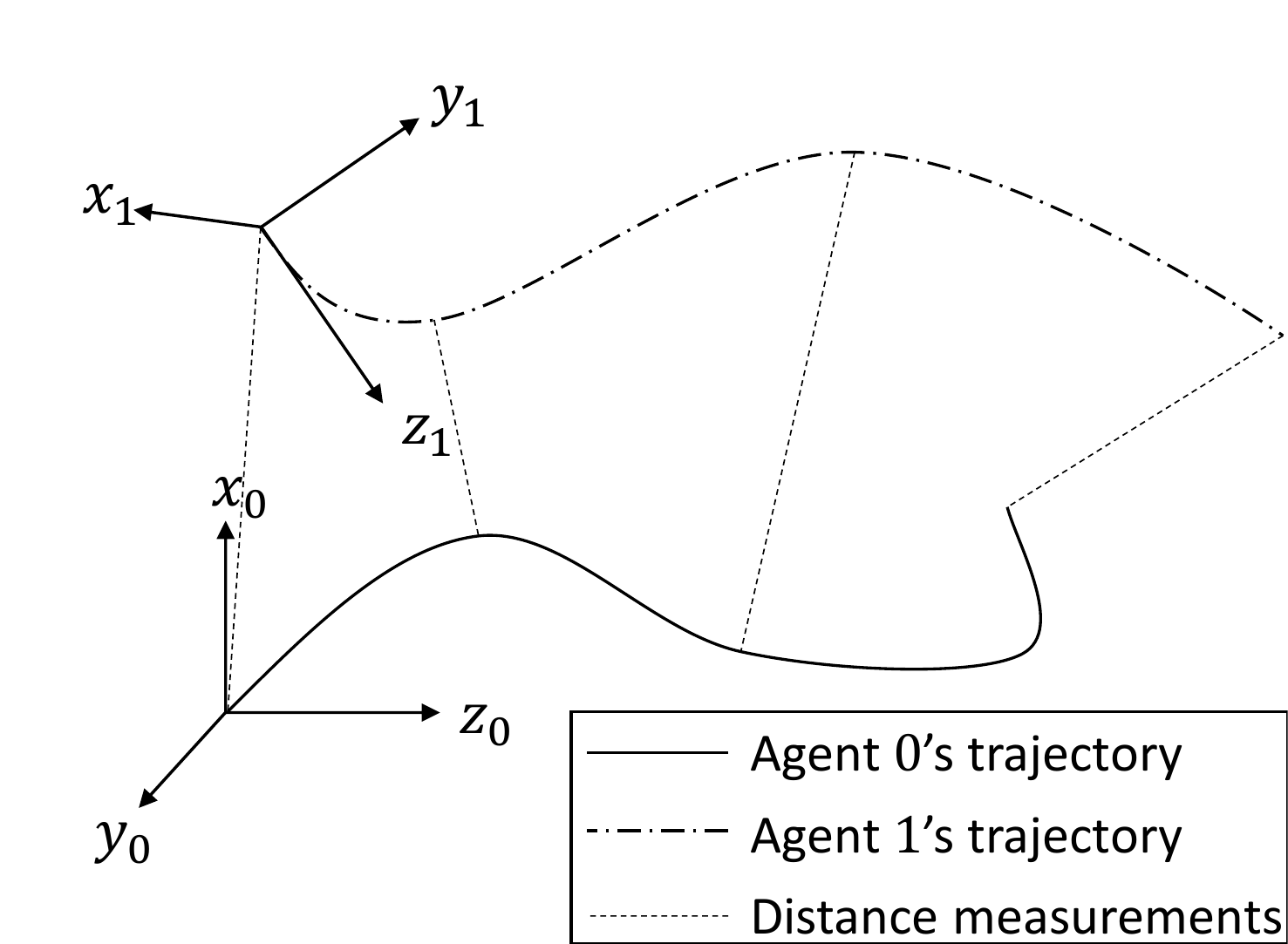}
\par\end{centering}
\caption{Demonstration of the elementary two-agent localization problem}
\label{fig1-1}
\end{figure}

In addition, suppose $d$ is the distance between
the two agents and define 
\[
\bar{p}=Rp_{1}+T-p_{0}
\]
Thus $\bar p$ is the relative position vector
of agent 1 as seen from agent 0, using global coordinates. Immediately
we have 
\begin{equation}
d=\|\bar{p}\|=\|Rp_{1}+T-p_{0}\|\label{distance norm-1}
\end{equation}

Suppose at time instants  $\tau=\tau_{k},~~k=1,2,\cdots,N$,
each agent takes distance measurements $d[k]$ and each knows the coordinates
$p_{0}[k]$ and $p_{1}[k]$ at these time instants through messsage interchange. The objective
is to infer $R$ and $T$ after a finite number of measurements. Knowledge of $R$ and $T$ yields knowledge of the positions of agent 1 in global coordinates, i.e. solves the localization problem for that agent.

Note that there are 6 degrees of freedom to pin down, three each associated with $R$ and $T$. Therefore, one must expect that at least 6 measurements would be needed to solve the relevant equation set. As summarized in \cite{dietmaier1998stewart} there are generally 40 solutions of that set 
\footnote{The equation set is actually a set of simultaneous polynomial equations. Unless all the equations are linear, as a matter of algebra n simultaneous equations in n unknowns always have multiple solutions, provided they are solvable in the first place. }. 
One disambiguates only by using further measurements. 
Although for generic agent trajectories, the minimal number of measurements to secure disambiguation is 7, there could be a problem if the trajectories are special. E.g., for two agents both flying in exactly straight lines, as it turns out, distance only measurements will never give a unique solution. In the rest of this paper, we refer to a "generic trajectory" as a trajactory that has a unique solution when there are more than 7 distance measurements. Note the set of "generic trajectories" will happen with probability 1 if agents are doing a random walk. 
\footnote{The situation is analagous to what happens when one is solving the linear equation Ax=b for square A. Generic A are those which are nonsingular. Nongeneric or special A are those which are singular. Randomly selected A will with probaiblity 1 be nonsingular or generic. Just as linear equations may not be solvable for certain parameter values, so can the nonlinear equations relevant here be nonsolvable for certain parameter values, essentailly defined by special trajectories. }  

Regarding the SDP based approach we discuss later in this paper, if less than 7 measurements are taken, then the algorithm will randomly give one solution from the multilple possibilities that satisfies all measurements, but the solution may be far away from the ground truth. 
Noise contaminating the distance measurements is a further issue to contemplate.

\subsection{Identifying the minimal number of measurements required}

On the theory side, there are some previous papers providing some
analysis of this problem for the two agent case or solving a simplified version of this problem
in 2D. The paper \cite{motevallian2013new} points out that the two-agent
problem in 2D is almost always solvable with no less than 4 inter-agent
measurements, followed by a computationally effective solution of
the problem in a two-dimensional ambient space via the concept of
four bar chain in mechanics \cite{Chung2005coupler}. Because the
solution of this two-agent localization problem in 3D is analogous
to solving the forward kinematics of a Stewart-Gough platform \cite{stewart1966plantform}
without the constraints that both the fixed and mobile platform are
planar, we know that with 6 distance measurements, the elementary
two-agent localization problem stated formally in Section \ref{sec:A tentative solution} normally has 40 distinct solutions \cite{dietmaier1998stewart}, though some of the
solutions may not be real and hence must be discarded. Not unsurprisingly, an additional, i.e. 7th,  interagent distance measurement is enough to disambiguate the multiple solutions, at least generically, so that a unique solution results. 

A related result along these lines can be found in \cite{yu2006principles}, expressed in terms of the merging of globally rigid  frameworks. A globally rigid  framework essentially is a framework of vertices and  bars of known lengths joining the vertices, such that the specification of the lengths defines the framework up to a congruence. Now the coordinate-alignment problem of interest to us can generally be reformulated as one which (a) associates a 3D (globally rigid) body with each agent (the body being
a polyhedron with vertices defined by the agent positions at each
time an interagent distance is measured), (b) associates a bar (or
link between two bodies) with each interagent distance measurement,
and (c) seeks conditions on the overall number of links and rules
concerning the points on the bodies to which they can be incident
to ensure that the entire  framework is globally rigid,
i.e. every joint or bar end in the framework has a computable location
in a coordinate frame fixed to the framework, which cannot flex. 

The result of \cite{yu2006principles} states that if two globally rigid  frameworks in an ambient three-dimensional space are joined by 7 or more bars of fixed lengths (and the endpoints of the bars can only coincide to a limited extent, or are distinct) then the merged framework is again globally rigid. This means that if one of the frameworks before merging is localized, the whole framework after merging (and thus the second framework before merging) is localized.

Other relevant literature includes that on {\it{multi body-bar problems}}, see \cite{tay1984recent,connelly2013generic}. A body-bar framework is one where bars are regarded as joining rigid bodies, rather than intersect in a point or joint. In our context, each rigid body is defined, as noted above, by the set of points at which a particular agent is located at the times distance measurements are taken.  These papers focus on determination of counting conditions on the links to ensure rigidity (nondeformability roughly) and global rigidity (unique  specification up to congruence) of body-bar frameworks. In the particular case of two agents, they provide the same conclusion already noted, i.e. generically 7 measurements are required. The counting conditions are also consistent with the limited set of multiagent cases we discuss later in this paper. 

Hence, in the end, we consider vehicle localization for set-ups corresponding to a proper subset of all possible globally rigid body-bar frameworks, or equivalently a subset of all possible information patterns linking multiple UAVs, in order that recursive localization can be achieved. 

\section{Solution to the two-agent problem}

\label{sec:A tentative solution} In this section, we consider the
two agent case, apart from noting at the end of the section in a Remark how the ideas can be applied to certain multiagent arrangements. We first present, in subsection A, a semidefinite programming (SDP)
approach for solving a constrained least squares problem in the case
of no less than 7 measurements. This solution actually is a relaxed solution (in the sense of optimization theory) of the problem of interest, and so a further step may have to be undertaken to deal with possible consequences of the relaxation. This further step may induce failure of
 the constraint that the matrix $R$ is orthogonal with
determinant 1, and so we show how to modify the outcome of the calculations
of Subsection 3.1 to achieve an orthogonal matrix in Subsection 3.2. The outcome of the SDP  computations is an estimate of the rotation matrix and translation vector. These estimates however are not maximum likelihood estimates, even if the noise contaminating the distance measurements is zero mean and gaussian. In Subsection 3.3, we show how a maximum likelihood estimate of the matrix and vector can be obtained by a gradient flow; it is vital to initalize this properly, and the result of the SDP calculation is used for that purpose.

In this section, because there are only two agents, we let agent $x$ be the reference agent, whose position in the global coordinate
system is denoted by $p_{x}=[x_{1},x_{2},x_{3}]^{\top}$. On the other
hand, agent $y$ is the agent to be localized, whose position in
its own local coordinate system is denoted by $p_{y}=[y_{1},y_{2},y_{3}]^{\top}$.
 Suppose further that a coordinate system transformation, including
a $3\times3$ rotation matrix $R$ and a $3\times1$ translation vector
$T$, can transfer agent $y$'s local coordinates into global coordinates,
so that the position of agent $y$ in the global coordinate system
is $\bar{p}=Rp_{y}+T$. As before, $d$ denotes an interagent distance. 

\subsection{A semidefinite programming (SDP) relaxation}

\label{sec::A linear processing approach} Suppose $R=\{r_{ij}\}$,
$T=[t_{1},t_{2},t_{3}]^{\top}$. 
There holds for measurements at time $\tau_{1},\tau_{2},\dots \tau_{N}$ (with the time index suppressed in the following equations)

\begin{equation}
d^2=\|\bar p_y-p_x\|^2\label{eq::display z2}
\end{equation}

The equation for $d^{2}$ can be written as

\begin{equation}\label{eq:newshort z2}
\begin{split}
d^2&=\|\bar p_y-p_x\|^2\\
&=\|Rp_y+T-p_x\|^2\\
&=\|p_y\|^2+\|T\|^2+\|p_x\|^2+2T^{\top}Rp_y-2p_x^{\top}Rp_y-2p_x^{\top}T
\end{split}
\end{equation}

The fourth term involves quadratic expressions of the unknowns, i.e. the entries of $R$ and $T$. The fifth and sixth terms involve linear terms only. The end result is the following quadratic constraint on the entries of $R$ and $T$,  presented to place all the unknowns on the right side of the equation:

\begin{equation}
\begin{split} & d^{2}-\|p_{x}\|^{2}-\|p_{y}\|^{2}\\
= & -2x_{1}y_{1}r_{11}-2x_{1}y_{2}r_{12}-2x_{1}y_{3}r_{13}\\
- & 2x_{2}y_{1}r_{21}-2x_{2}y_{2}r_{22}-2x_{2}y_{3}r_{23}\\
- & 2x_{3}y_{1}r_{31}-2x_{3}y_{2}r_{32}-2x_{3}y_{3}r_{33}\\
- & 2x_{1}t_{1}-2x_{2}t_{2}-2x_{3}t_{3}\\
+ & 2y_{1}\sum r_{i1}t_{i}+2y_{2}\sum r_{i2}t_{i}+2y_{3}\sum r_{i3}t_{i}+\sum t_{i}^{2}
\end{split}
\label{eq::short z2}
\end{equation}

If we regard each summand in \eqref{eq::short z2} as a product of known values $-2x_{1}y_{1},~-2x_{1}y_{2},~\cdots,~1$ and independent unknowns $r_{11},~r_{12},~\cdots,~\sum t_{i}^{2}$, one can seek to solve the set of equations with a sufficient number of measurements. Generally 16 measurements are sufficient because there are 16 linearly independent unknowns $r_{11},~r_{12},~\cdots,~\sum t_{i}^{2}$ (at least if the associated coefficient matrix has full rank, which proves to be almost always the case). When the coefficient matrix is close to singular, more measurements are required. If there are more measurements than required, a least-squares solution is used.

In the above solution process, we regard each summand in \eqref{eq::short z2}
as a product of a known value and an independent unknown. This therefore
leaves out the consideration of the nonlinear constraints (though
recall that \eqref{eq::short z2} only arises after use of a number
of such constraints). Using these constraints may reduce the number
of measurements required and improve the estimation accuracy in the
presence of noise as we now argue in more detail. 

Define $\Theta=[\theta_{1},\theta_{2},\cdots,\theta_{16}]^{\top}$
to be the 16-vector of unknowns $r_{11},~r_{12},~\cdots,\sum r_{i3}t_{i},~\sum t_{i}^{2}$
in \eqref{eq::short z2} and $A[k]=[a_{k~1},a_{k~2},\cdots,a_{k~16}]$
to be the row vector of known values $-2x_{1}y_{1},~-2x_{1}y_{2},~\cdots,~1$
in \eqref{eq::short z2} at time $t=t_{k}$. Further, assuming for convenience there are 16 measurements,  suppose $A=\Big[A[1]^{\top}~A[2]^{\top},\cdots,A[16]^{\top}\Big]^{\top}$
and $b$ is a column vector with the $k$th entry being $z[k]^{2}-\|p_{x}[k]\|^{2}-\|p_{y}[k]\|^{2}$.

The optimization problem we are trying to solve is

\begin{equation}
\begin{split}\mbox{argmin}_{\Theta}~~\frac{1}{2}\|A\Theta-b\|^{2}\\
\mbox{subject to}~~C(\Theta)=0
\end{split}
\label{eq::constratint optimization}
\end{equation}
where $C(\Theta)=0$ expresses all the constraints among the $\theta_{i}$.
Note there are altogether 10 independent constraints expressed in $C(\Theta)$
as listed in the Appendix. An additional four constraints are listed which are not independent of the 10 just mentioned. 

\textcolor{black}{We now show how to solve \eqref{eq::constratint optimization},
using semidefinite programming. Define 
\[
X=\left[\begin{array}{c}
\Theta\\
-1
\end{array}\right]\left[\begin{array}{c}
\Theta\\
-1
\end{array}\right]^{\top}
\]
}

\textcolor{black}{Solving for $\Theta$ is equivalent to solving for
$X$ with the constraints that}

\textcolor{black}{1. $X$ is positive semidefinite (denoted by $X\succcurlyeq0$)}

\textcolor{black}{2. $\mbox{rank}(X)=1$}

\textcolor{black}{3. The bottom right corner element $X_{17,17}=1$}

\textcolor{black}{Suppose that $\left\langle u,v\right\rangle $ denote
the inner product of two matrices $u$ and $v$, i.e. $\left\langle u,v\right\rangle =\mbox{trace}(u^{\top}v)$,
and define $P=\left[\begin{array}{cc}
A & b\end{array}\right]^{\top}\left[\begin{array}{cc}
A & b\end{array}\right]$}

The objective function can be written as

\[
 \frac{1}{2}\|A\Theta-b\|^{2}
\propto \left[\begin{array}{c}
\Theta\\
-1
\end{array}\right]^{\top}\left[\begin{array}{cc}
A & b\end{array}\right]^{\top}\left[\begin{array}{cc}
A & b\end{array}\right]\left[\begin{array}{c}
\Theta\\
-1
\end{array}\right]
=  \left\langle P,X\right\rangle 
\]

Similarly, the constraints in \eqref{eq::constraints}
can be written in terms of $X$

\[
\left\langle Q_{i},X\right\rangle =q_{i},i=1,\cdots,10
\]
where $Q_{i}$ are 17 by 17 symmetric matrices and $q_{i}$ are 10 scalars. Now the optimization problem becomes
\begin{equation}
\begin{split}\mbox{argmin}_{\Theta} & \left\langle P,X\right\rangle \\
\mbox{subject to} & \left\langle Q_{i},X\right\rangle =q_{i},i=1,\cdots,10\\
 & X_{17,17}=1\\
 & X\succcurlyeq0\\
 & \mbox{rank}(X)=1
\end{split}
\label{eq:SDP}
\end{equation}

A naive semidefinite programming (SDP) relaxation
is given by removing the rank constraint. After removing the constraint,
$C_{i}=0,i=1,\cdots,10$ no longer implies $C_{i}=0,i=11,\cdots,14$, so it is best to add those constraints as well. As observed in a number of simulations, the above additional constraints are helpful to obtain a low-rank solution for generic trajectories, and therefore helpful to reduce the rounding error when we recover the rank-1 solution using SVD (as shown later in the paper). In addition, because entries in the rotation matrix should satisfies $-1\leq r_{ij} \leq 1$, one can also consider including Reformulation-linearization-technique (RLT) constraints to further reduce the set of feasible solutions. 

After finding the solution, we can find its best
rank 1 approximation under matrix 2-norms by using SVD and setting
all but the largest singular value to zero, see Section 4.3 of \cite{hopcroft2012computer}.
In fact, it is generally a very good approximation because the solution
of the relaxed problem is very close to rank 1. In a number of numerical
simulations, the largest singular value of the solution is generally
$10^{2}\thicksim10^{5}$ larger than the second largest singular value. 

\textcolor{black}{The solution of the above SDP can then be used as
an \textit{initial condition} to the gradient optimization presented below in
Subsection \ref{sec::Gradient Descent Method}. One should note that
the number of measurements used in the semidefinite programming
method can increase arbitrarily and decrease to 7. Although the validity
of the semidefinite programming approach does not straightforwardly
imply that the minimum number of measurements is 7, we nevertheless
assume that this constraint always holds; this will be discussed in more
detail in Subsection \ref{sec::number of measurements}. }

\subsection{Obtaining a rotation matrix}

\label{sec::Procu Correction} In the noisy case, let $\tilde{T}$
and $\tilde{R}$ denote the estimated value of $T$ and $R$ respectively.
Note that the imposition of the rank 1 constraint through approximation,
as a final tidy-up step of the algorithm, may destroy orthogonality,
though up to that point, SDP guarantees orthogonality by virtue of
the equality constraints. In this case, the obtained $\tilde{R}$
may not satisfy all the conditions to be a rotation matrix; therefore,
one more step can be taken to find the rotation matrix $\bar{R}$
that has the closest Frobenius norm to $R$. Thus one seeks
\begin{equation}
\bar{R}=\mbox{arg}\min_{\Omega}\|\Omega-R\|_{F}~~~~~\mbox{subject to }\Omega\Omega^{\top}=I, \det \Omega=1.\label{Orthogonal Procrustes problem}
\end{equation}
where $\|\cdot\|_{F}$ denotes the Frobenius norm.

This minimization problem is a special case of the Orthogonal Procrustes problem
\cite[pp. 29-34]{gower2004procrustes}. To find this orthogonal matrix
$\bar{R}$, the singular value decomposition can be used 
\begin{equation}
\tilde{R}=U\Sigma V^{*}\label{eq:svd of R}
\end{equation}

Suppose $J$ is a diagonal matrix
with the last entry on the diagonal being $-1$ and all other entries
on the diagonal being $1$. The solution of this constrained version
of the Orthogonal Procrustes problem is

\begin{equation}
\bar{R}=\left\{ \begin{array}{cc}
UV^{*} & \mbox{if}~~\mbox{det}(UV^{*})=1\\
UJV^{*} & \mbox{if}~~\mbox{det}(UV^{*})=-1
\end{array}\right.\label{eq:solution of specific procrustes}
\end{equation}

Note that det$(\Sigma-I)$ can be used as a error measure.

\subsection{Maximum Likelihood Estimation and Gradient Flow on Manifold}

\label{sec::Gradient Descent Method} To refine the solution of the
SDP, it is natural to contemplate a gradient descent optimization.
In the noisy situation, we want a maximum likelihood estimate of $R$
and $T$, and this will not be given by the quadratic index and constrained
least squares estimate of SDP. Hence we will formulate a new index
whose minimization yields the MLE, and then examine a gradient descent
algorithm to compute the minimum. 

Suppose the measurement of distance $z[k]$ is contaminated by a Gaussian
noise with zero mean and some standard deviation; i.e. the measurement sensor delivers $\tilde{z}[k]=z[k]+\xi,~\xi\sim N(0,\sigma^{2})$.
We assume in this section that $p_{x}$ and $p_{y}$ can be obtained
without noise and that the measurement noise values at different times
are independent.

Now we obtain $\tilde{z}[k]-\|Rp_{y}[k]+T-p_{x}[k]\|\sim N(0,\sigma^{2})$.
The likelihood function is

\begin{equation}
\begin{split} & \mathcal{L}(p_{x},p_{y},z|R,T)\\
= & \frac{1}{\sigma\sqrt{2\pi}}\prod_{k=1}^{N}\mbox{exp}\Big[-\frac{(\tilde{z}[k]-\|Rp_{y}[k]+T-p_{x}[k]\|)^{2}}{2\sigma^{2}}\Big]
\end{split}
\label{MLE}
\end{equation}
and 
\begin{equation}
\begin{split} & \mbox{log}\mathcal{L}(p_{x},p_{y},z|R,T)\\
= & \sum_{k=1}^{N}\Big[-\frac{(\tilde{z}[k]-\|Rp_{y}[k]+T-p_{x}[k]\|)^{2}}{2\sigma^{2}}\Big]-\mbox{log}(\sigma\sqrt{2\pi})
\end{split}
\end{equation}

Therefore the maximum likelihood estimate is given by solving the
optimization problem below

\begin{equation}
\begin{split}\tilde{R},\tilde{T}=\mbox{arg}\min_{R,T}\sum_{k=1}^{N}(\tilde{z}[k]-\|Rp_{y}[k]+T-p_{x}[k]\|)^{2}\\
\mbox{subject to }RR^{\top}=1~~\mbox{det}(R)=1
\end{split}
\label{optimization problem}
\end{equation}

Like many MLE estimation problems, the index is not convex and minimization
is not necessarily straightforward. A particular problem with using gradient descent on any nonconvex function is to find an initialization within the capture region of the global minimum, and this is where the calculations of the previous subsections become relevant if not critical. We use the
result from linear processing (with the number of measurements being
16) or the SDP approach (with the number of
measurements being greater than or equal to 7) in the previous subsections
to initialize  a gradient descent algorithm aimed at
finding the minimum. 
\footnote{In unpublished work \cite{you2017underwater} studying  autonomous underwater vehicle localization using distance-only measurements to a non-GPS-equipped vehicle, minimization of the same MLE index is tackled using a form of approximation for the index, with application of a parallel projection algorithm. Proper comparison with the methods of this paper cannot be made, due to the limited details in the paper. How issues of initialization can be effectively tackled is also not clear. }

It is useful for this purpose to know how to calculate the gradient
of a function of a special orthogonal matrix on the manifold of special
orthogonal matrices. Consider $f$: $SO_{3}\rightarrow\mathbb{R}$,
mapping special orthogonal matrices to the reals. Suppose we want
to compute the gradient, reflecting the orthogonal property. 

The general idea (technically a consequence of the fact that $SO_{3}$ is a Riemannian manifold and so inherits a metric from the Euclidean
space in which it is embedded \cite{helmke1994optimization}) is:
first we consider a point $R\in\mathbb{R}^{3\times3}$ on the $SO_{3}$
manifold, and compute the gradient $\frac{\partial f}{\partial R}$ in the standard way, then we project it onto the tangent space of $SO_{3}$ at the point $R$.
For this purpose, we need to have the tangent space and normal spaces
of $SO_{3}$ at some point $R$, and also the projection of a vector
to the tangent space.

\begin{lem} i. The tangent space of $SO_{3}$ at $R\in SO_{3}$
is the set of $\mathcal{P}$ such that \cite{helmke1994optimization}
\[
\mathcal{P}^{\top}R+R^{\top}\mathcal{P}=0
\]
or equivalently 
\[
T_{SO_{3}(R)}=\{RQ,Q+Q^{\top}=0\}
\]
and the normal space is 
\[
N_{SO_{3}(R)}=\{RS,S-S^{\top}=0\}
\]
where $T_{SO_{3}(R)}$ and $N_{SO_{3}(R)}$ denote the tangent and
normal spaces at $R$ respectively.

ii. Furthermore, suppose at a point $R\in SO_{3}$, the gradient of
$f$ in $\mathbb{R}^{3\times3}$ is 
\[
\frac{\partial f}{\partial R}=M
\]
Then the projection of $M$ on the tangent space $T_{SO_{3}(R)}$
is given by 
\[
M_{T}=\frac{1}{2}M-\frac{1}{2}RM^{\top}R
\]
\end{lem}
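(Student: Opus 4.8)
The plan is to regard $SO_3$ as a smooth embedded submanifold of $\mathbb{R}^{3\times3}$ cut out (locally) by the orthogonality constraint, read off its tangent space by differentiating that constraint, identify the normal space as the Frobenius-orthogonal complement, and then derive the projection formula by an explicit symmetric/skew-symmetric splitting. For part (i), first I would take an arbitrary smooth curve $R(t)\subset SO_3$ with $R(0)=R$ and $\dot R(0)=\mathcal{P}$, and differentiate $R(t)^{\top}R(t)=I$ at $t=0$ to obtain $\mathcal{P}^{\top}R+R^{\top}\mathcal{P}=0$; hence every tangent vector satisfies this linear relation. The constraint $\det R=1$ contributes no further equation, since it merely selects the connected component of $O_3$ containing $R$ and is an open condition. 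To confirm that $\{\mathcal{P}:\mathcal{P}^{\top}R+R^{\top}\mathcal{P}=0\}$ is exactly the tangent space rather than a superset, I would substitute $\mathcal{P}=RQ$ (legitimate since $R$ is invertible, with $Q=R^{\top}\mathcal{P}$); using $R^{\top}R=I$ the relation collapses to $Q+Q^{\top}=0$, so $Q$ ranges over the $3$-dimensional space of skew-symmetric matrices. As $\dim SO_3=3$, this forces equality and simultaneously yields the parametrization $T_{SO_3(R)}=\{RQ:Q+Q^{\top}=0\}$.

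Next I would identify the normal space relative to the inner product $\langle u,v\rangle=\mathrm{trace}(u^{\top}v)$ used throughout. For the candidate $\{RS:S-S^{\top}=0\}$ I would compute, for any skew $Q$ and symmetric $S$, that $\langle RQ,RS\rangle=\mathrm{trace}(Q^{\top}R^{\top}RS)=\mathrm{trace}(Q^{\top}S)$, and then invoke the cyclic property of the trace together with $Q^{\top}=-Q$ and $S^{\top}=S$ to show this scalar equals its own negative, hence vanishes. Thus $\{RS:S\text{ symmetric}\}$ is orthogonal to the tangent space; since symmetric $3\times3$ matrices span a $6$-dimensional space and $6=9-3$, a dimension count forces this to be the entire normal space $N_{SO_3(R)}$.

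For part (ii) I would exploit the orthogonal decomposition just established. Any $M\in\mathbb{R}^{3\times3}$ admits a unique splitting $M=M_T+M_N$ with $M_T=RQ$ ($Q$ skew) and $M_N=RS$ ($S$ symmetric); premultiplying by $R^{\top}$ gives $R^{\top}M=Q+S$, which is precisely the decomposition of $R^{\top}M$ into its skew and symmetric parts. Therefore $Q=\tfrac12\bigl(R^{\top}M-M^{\top}R\bigr)$, and premultiplying by $R$ while using $RR^{\top}=I$ yields $M_T=RQ=\tfrac12\bigl(M-RM^{\top}R\bigr)$, exactly the claimed formula.

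I expect no serious obstacle, since everything reduces to standard facts about embedded manifolds and the symmetric/skew-symmetric splitting of a matrix. The step demanding the most care is the normal-space verification: one must be disciplined with transposes and the cyclic trace identity to establish orthogonality, and must carry out the dimension bookkeeping ($3+6=9$) at both stages to rule out the possibility that the proposed tangent and normal spaces are merely proper subspaces of the true ones.
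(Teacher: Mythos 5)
Your proof is correct, and for the part the paper actually proves it is essentially the same computation, but your scope is genuinely larger, so a comparison is worth making. The paper does not prove part (i) at all: it simply cites Helmke and Moore for the tangent- and normal-space characterizations, whereas you derive them from first principles (differentiating $R(t)^{\top}R(t)=I$ along curves, the substitution $Q=R^{\top}\mathcal{P}$, the trace-orthogonality of $RQ$ and $RS$, and the dimension counts $3+6=9$), making your argument self-contained where the paper's is not. For part (ii), the two arguments are the same algebra run in opposite directions: the paper writes down $M_{T}=\frac{1}{2}M-\frac{1}{2}RM^{\top}R$ and $M_{N}=\frac{1}{2}M+\frac{1}{2}RM^{\top}R$ and \emph{verifies} that $R^{\top}M_{T}$ is skew-symmetric, $R^{\top}M_{N}$ is symmetric, and $M_{T}+M_{N}=M$, while you \emph{derive} the formula by splitting $R^{\top}M$ into its skew and symmetric parts and premultiplying by $R$; your constructive route explains where the formula comes from, the paper's verification is marginally shorter. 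One point you handle more carefully than the paper: concluding that $M_{T}$ is \emph{the} projection from the facts $M_{T}\in T_{SO_{3}(R)}$, $M_{N}\in N_{SO_{3}(R)}$, $M_{T}+M_{N}=M$ requires knowing that the tangent and normal spaces are orthogonal complements under the trace inner product --- your explicit orthogonality computation and dimension bookkeeping establish this, whereas the paper inherits it implicitly from the cited reference.
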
 
\begin{proof}
The first part of the proof regarding the tangent and normal spaces
of $SO_{3}$ (i.) is given in \cite{helmke1994optimization}. We now
prove that (ii.) the projection of $M$ on $T_{SO_{3}(R)}$ is given
by $M_{T}=\frac{1}{2}M-\frac{1}{2}RM^{\top}R$.

First, let $M_{N}=\frac{1}{2}M+\frac{1}{2}RM^{\top}R$. Observe that
\[
R^{\top}M_{N}=\frac{1}{2}R^{\top}M+\frac{1}{2}M^{\top}R
\]
is symmetric; therefore, $M_{N}\in N_{SO_{3}(R)}$ is normal to the
tangent space $T_{SO_{3}(R)}$ at $R$. Furthermore, 
\[
R^{\top}M_{T}=\frac{1}{2}R^{\top}M-\frac{1}{2}M^{\top}R
\]
is skew-symmetric; thus $M_{T}\in T_{SO_{3}(R)}$. Further because
$M_{T}+M_{N}=M$, $M_{T}$ is the projection of $M$ on $T_{SO_{3}(R)}$. 
\end{proof}
Now suppose $f=\sum_{i=1}^{N}(z[k]-\|Rp_{y}[k]+T-p_{x}[k]\|)^{2}$.
It is straightforward to obtain the gradient $M_{T}$ and $\frac{\partial f}{\partial T}$, and so
we have the gradient flow

\begin{equation}
\begin{split}\dot{R} & =-M_{T}\\
\dot{T} & =-\frac{\partial f}{\partial T}
\end{split}
\label{gradient flow}
\end{equation}
The SDP relaxation of Section \ref{sec:A tentative solution} is likely
to give a good initial condition, and then a gradient method using
a discretization of \eqref{gradient flow} is sufficient in solving
this optimization problem. The use of the Procrustes problem algorithm
can be applied in each step to tidy up departures from orthogonality
due to round-off and discretization error. 

\begin{rem}
In the case of multi-agent localization, the same algorithm can be adopted for those networks where each non-GPS equipped agent has 7 measurements to a GPS-equipped agent. Thus if there is only one GPS-equipped agent, the graph will be a star form; each of the GPS-denied agents (with INS) must have 7 or more distance measurements between it and the GPS-equipped agent. This allows a collection of two-agent problems to be solved (possibly in parallel). Note that this topology may well correspond to the physical situation of one agent remaining high above a building canyon, while other agents explore the building canyon. 
\end{rem}

\section{Simulations and Performance Evaluation}

\subsection{Simulation and Comparison between SDP and Gradient Descent}

\label{sec::simu} Simulations are given below. In the simulations,
white Gaussian noise is added to the measurements of $d$, and there are 7 such measurements. Furthermore, $\mbox{SNR}$
denotes the Signal-to-noise ratio in dB. Here SNR is defined as $\mathrm{SNR_{dB}}=20\log_{10}\left(\frac{A_{\mathrm{signal}}}{A_{\mathrm{noise}}}\right)$
where $A_{\mathrm{signal}}$ is the average distance between agents
and $A_{\mathrm{noise}}$ is the root mean square (RMS) amplitude of the noise added in to the measurements. 

\begin{itemize}
\item In each figure, the upper subplot shows the measurements $z$, the
second subplot shows the three entries of the translation $T$, and
the third subplot shows the nine entries of the rotation matrix $\bar{R}$.
The ordering of the entries is $r_{11},r_{12},r_{13},r_{21},\cdots$.
Note the correction as described in section \ref{sec::Procu Correction}
is applied. 
\item In the upper subplot, the blue line shows the true value of $z$ and
the red line shows the noisy distance observations $\tilde{z}$. 
\item In the middle and lower subplot, 'o' marks the true value, '+' marks
the value obtained by solving the SDP relaxation and '$\times$' marks
the value obtained by gradient descent refinement. 
\item Figure \ref{p1} depicts the noiseless case, i.e. $\mbox{SNR}\rightarrow\infty$,
in Figure \ref{p2} $\mbox{SNR}=30$ and in Figure \ref{p3} $\mbox{SNR}=10$.
It is well known that in practice the distance accuracy (1-sigma distance
error) is inversely proportional to signal bandwidth as well as the
square root of SNR. For passive RF detection problems the SNR in free
space is inversely proportional to distance squared. At 1km the SNR
is usually high say 20 or more dB. \footnote{The value is relevant for the DSTG
vehicles used to obtain the real data discussed below.}
\item Gradient descent refinement can provide an improved result
in comparison to simple SDP relaxation. This is partly because gradient
descent refinement uses a better estimator, and partly because of
the relaxation error in SDP. However, it is also notable that SDP
relaxation is important in providing the initial condition to avoid convergence to a 
local nonglobal minimum in the gradient descent search for the MLE optimum. 
\end{itemize}
\begin{figure}
\begin{centering}
\includegraphics[width=0.42\textwidth]{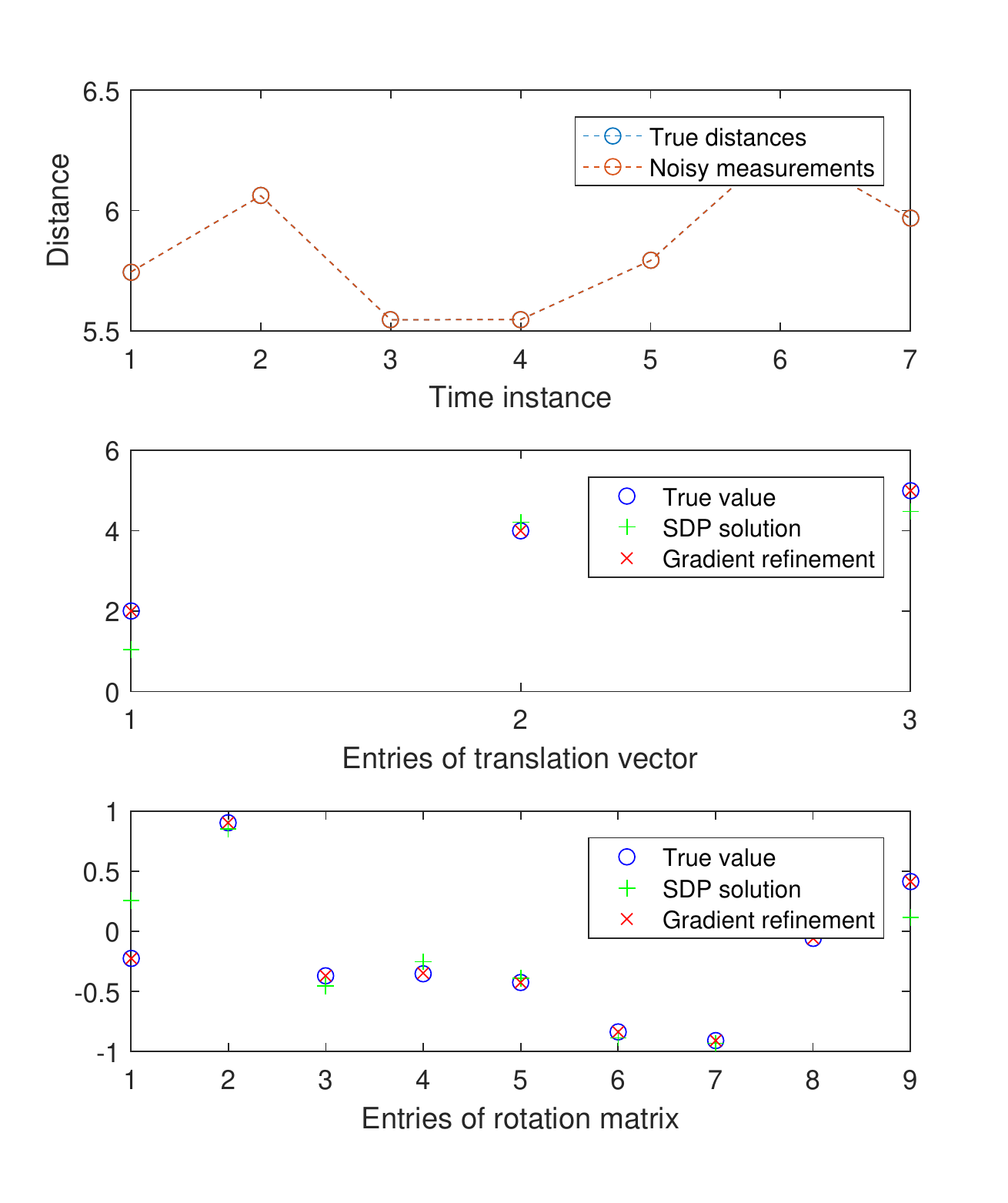} 
\par\end{centering}
\caption{Performance comparison between the SDP relaxation and gradient descent
refinement (noiseless)}
\label{p1} 
\end{figure}

\begin{figure}
\begin{centering}
\includegraphics[width=0.42\textwidth]{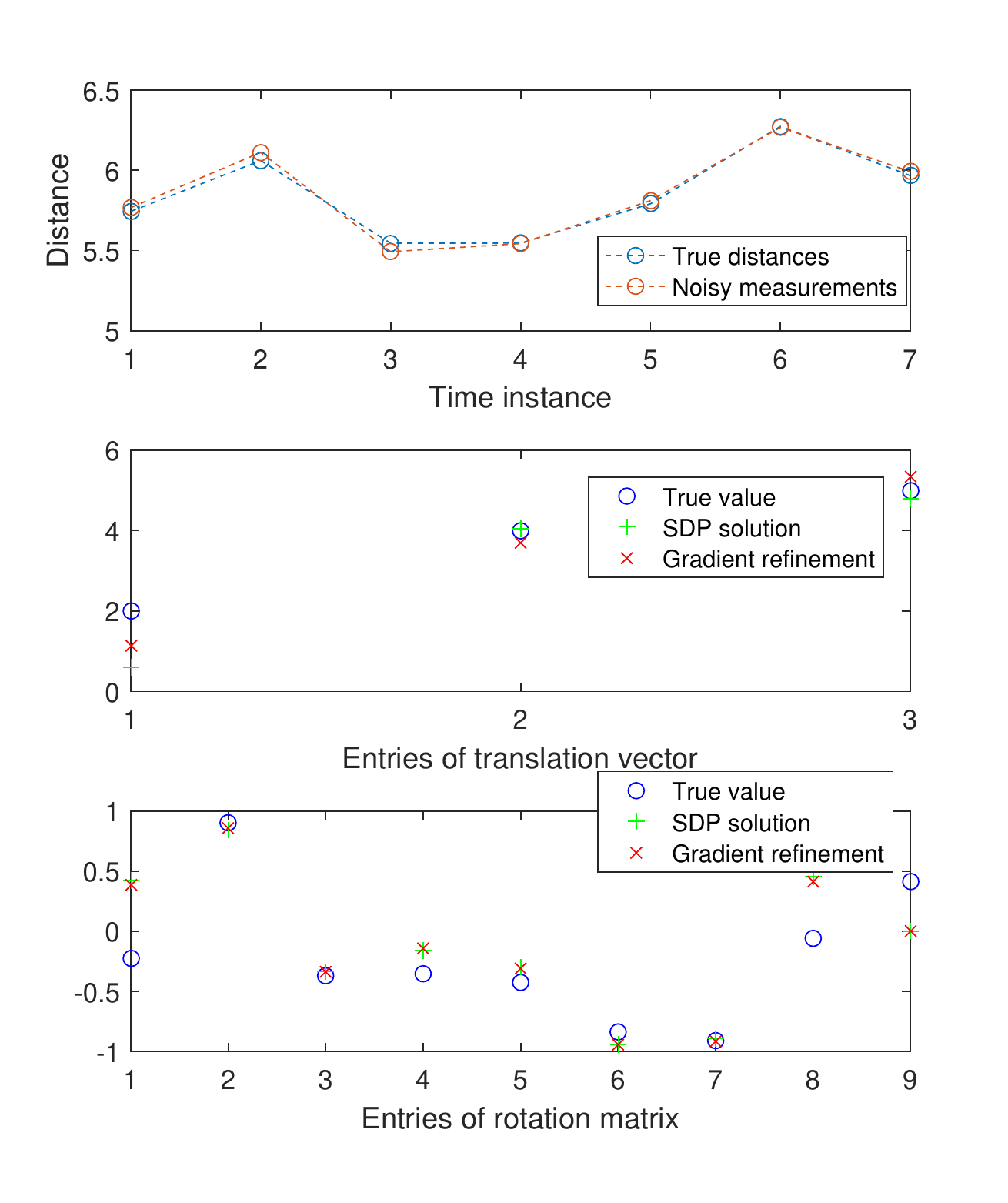} 
\par\end{centering}
\caption{Performance comparison between the SDP relaxation and gradient descent
refinement (SNR=30)}
\label{p2} 
\end{figure}

\begin{figure}
\begin{centering}
\includegraphics[width=0.42\textwidth]{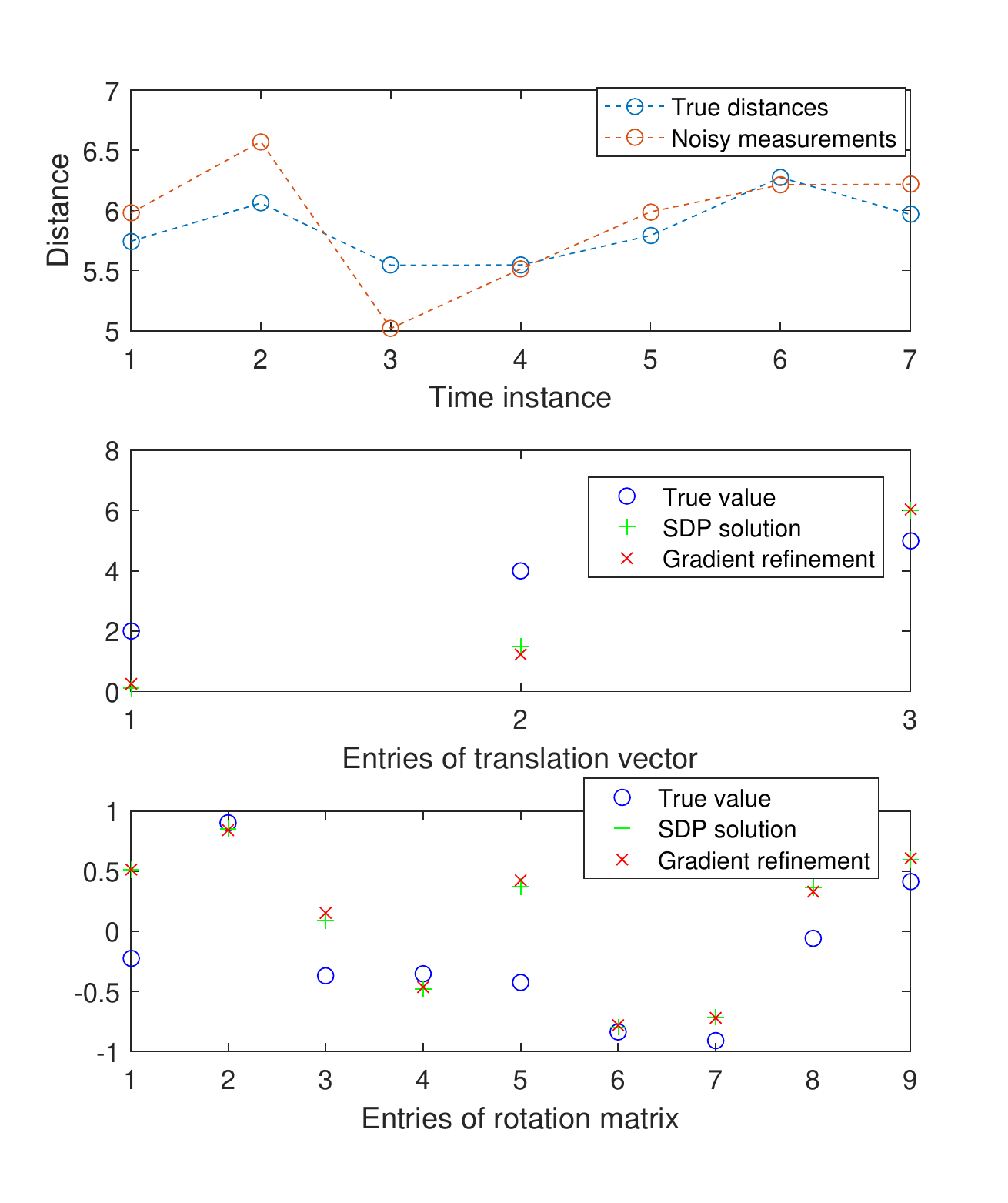} 
\par\end{centering}
\caption{Performance comparison between the SDP relaxation and gradient descent
refinement (SNR=10)}
\label{p3} 
\end{figure}

In the above simulations, which all have 7 measurements, a unique solution can be found for generic trajectories. A detailed argument based on algebraic geometry will then imply that for almost all instances of the problem,  the coefficient matrix $P$ in the objective function is of rank 7, carrying enough information to obtain a unique solution. Furthermore, as more and more measurements are obtained, the rank of $P$ will further increase, allowing more and more accurate estimation.   Nevertheless, there exist special cases where the rank of $P$ is always less than 7. 

Suppose for example the two agents pursue two parallel straight line paths. It is not hard to see there is insufficient data to obtain a unique rotation matrix. Also, with noise, nearly but not exactly parallel straight line paths may cause a problem.

With more measurements provided, the result can be improved
in both the SDP relaxation and gradient descent refinement for generic trajectories. 

\begin{rem} The objective function of the constrained linear least
squares problem and the objective function of the gradient descent
method are different. The first one is designed to exploit as far as possible the linear occurrence of unknowns in
 the $z^{2}$ expression while the second one is derived from a
maximum likelihood estimator. As noted, the constrained least squares
problem is used to find an appropriate starting point for the gradient
descent method, which then finds the accurate maximum likelihood estimate.
\end{rem}

\subsection{Number of measurements vs. estimation error}

\label{sec::number of measurements} From \cite{dietmaier1998stewart},
 one can conclude that with 6 distance measurements, the 3D geolocation
problem has 40 solutions, where some of the solutions may not be real and must be
discarded. The paper \cite{yu2006principles} further shows that an
additional measurement can almost always disambiguate the solutions
and with 7 distance-only measurements, a unique solution can be found. 

Although the use of SDP followed by the gradient-based MLE algorithm as proposed
in this paper does not require a minimum number of measurements, an
implicit assumption is that the number of measurements is greater
than or equal to 7. In the case of less than or equal to 5 measurements,
the gradient of the objective function of the MLE optimization is
always zero on a manifold, and the final result is randomly located
on the manifold. In addition, in the case of 6 measurements, there
is no way to deal with the issue of disambiguating local optima. With the number of measurements being greater than or equal to 7,
it is reasonable to expect that the performance of the approach will
be improved as the number of measurements increases. To study this, let us suppose a direction error measure is defined as 
\begin{equation}
E_{\mbox{direction}}=\mbox{arccos}\Big(\frac{\tilde{T}^{\top}T}{\|\tilde{T}\|\|T\|}\Big)\label{eq::relativ error angle}
\end{equation}

\begin{figure}
\begin{centering}
\includegraphics[width=0.42\textwidth]{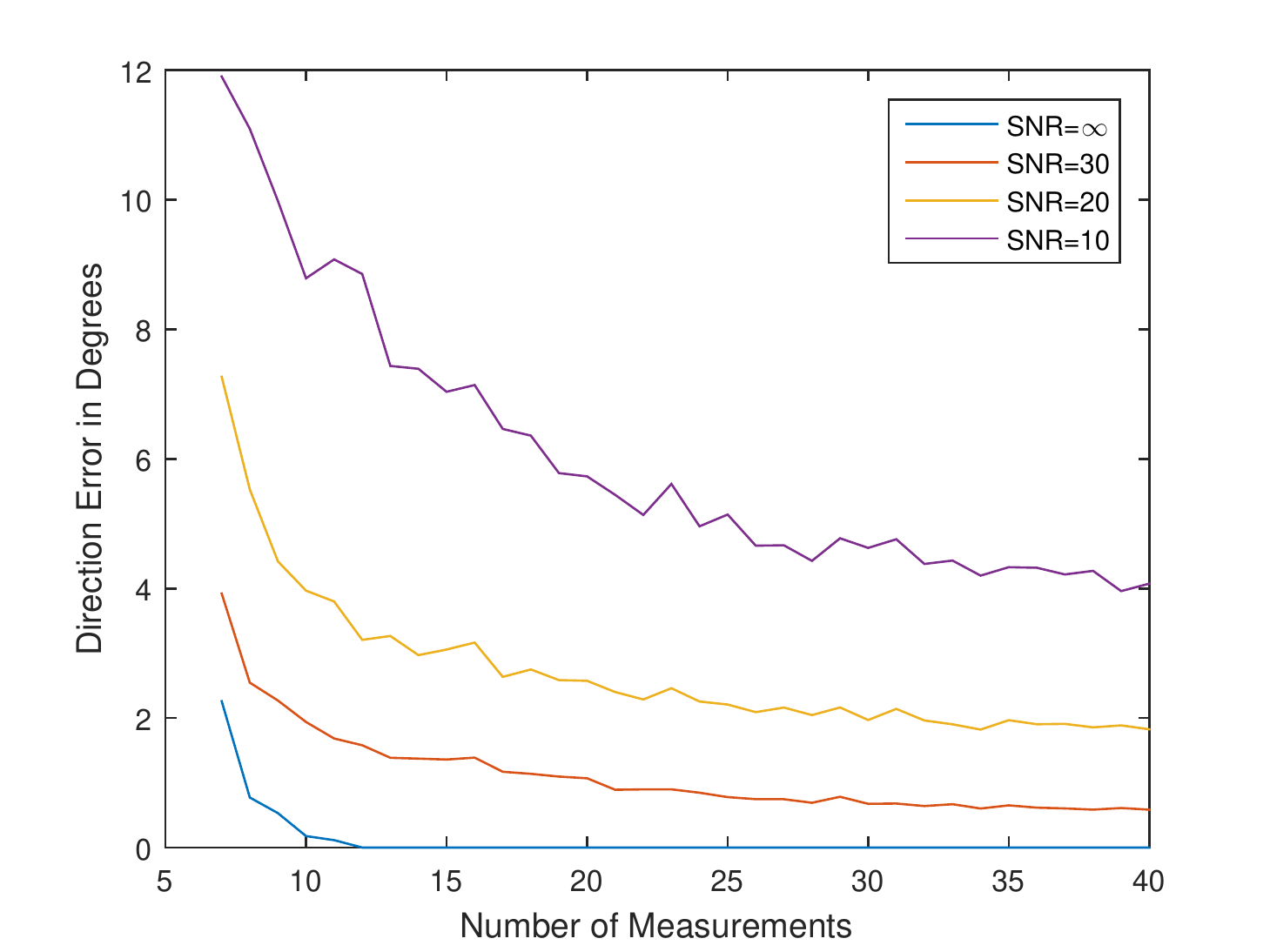} 
\par\end{centering}
\caption{Direction error in degrees vs. Number of measurements}
\label{fig::numMeasurePerformancedire} 
\end{figure}

Figure \ref{fig::numMeasurePerformancedire} shows the change of direction
error in degrees as the number of measurements increases. The simulations were run
with three different levels of SNR: 10dB, 20dB and 30 dB. In the above
figure, the blue curve shows the result with SNR$=$10, the red curve
shows the result with SNR$=$20 and the yellow curve shows the result
with SNR$=$30. With each number of measurements and each level of
SNR, the result shown in the above figures is the average of 200 simulations
with random vehicle trajectories and white noise.  

\subsection{Trial on real data}

Real flight data provided by Australian Defence Science and Technology Group is used in this section to test the performance
of the proposed method in practice. The data consists of: the true
positions of UAV1 in the global coordinates, the positions of UAV2
in its local INS coordinates and the distance measurements between the
pair of agents. The relevant numbers are recorded in Table \ref{my-label}.

\begin{table}[h]
\fontsize{6.5}{6.5}\selectfont \centering \caption{UAV localization trial}
\label{my-label}  %
\begin{tabular}{llllllll}
Time(s) & \multicolumn{3}{l}{UAV1 global Coordinates} & \multicolumn{3}{l}{UAV2 Local Coordinates} & Distances\tabularnewline
 & x(m)  & y(m)  & z(m)  & x(m)  & y(m)  & z(m)  & measured (m)\tabularnewline
3.2  & 349.1  & -924.1  & 374.4  & 1038.3  & 600.9  & 311.2  & 1541.0 \tabularnewline
12.9  & 576.2  & -945.0  & 371.9  & 1249.5  & 637.9  & 311.0  & 1516.7 \tabularnewline
22.7  & 781.0  & -870.3  & 372.5  & 1481.7  & 679.0  & 308.5  & 1381.6 \tabularnewline
32.5  & 936.8  & -712.4  & 373.7  & 1708.6  & 717.2  & 309.4  & 1149.9 \tabularnewline
42.3  & 1007.0  & -522.7  & 373.3  & 1939.5  & 755.1  & 309.3  & 907.7 \tabularnewline
52.1  & 992.0  & -299.4  & 373.5  & 2084.3  & 867.7  & 308.5  & 800.7 \tabularnewline
61.8  & 869.8  & -91.3  & 373.2  & 2040.9  & 1088.0  & 309.4  & 899.1 \tabularnewline
71.6  & 660.1  & 38.6  & 372.9  & 2004.3  & 1305.7  & 310.1  & 1120.9 \tabularnewline
80.5  & 431.4  & 56.6  & 373.1  & 1976.8  & 1507.0  & 309.5  & 1435.9 \tabularnewline
91.1  & 189.7  & -49.1  & 373.3  & 1933.3  & 1723.2  & 310.7  & 1867.1 \tabularnewline
100.9  & 33.9  & -262.2  & 373.6  & 1724.3  & 1755.4  & 310.5  & 2084.6 \tabularnewline
\end{tabular}
\end{table}

Now using the above constrained least-squares method followed by gradient
optimization, one can compute UAV 2's positions in the global coordinate
system as shown in Figure \ref{fig::real}. In Figure \ref{fig::real},
circles are the positions of UAV 1 in global coordinates; triangles
are the recovered position of UAV 2 in the global coordinates, and
the solid line is the true trajectory of UAV2. Comparing the triangles
and the solid line, we observe that the localization algorithm achieves
better accuracy in the north and east directions but poorer accuracy
in height. This can be explained by the following observation: we
find that the trajectories of the UAVs are very close to coplanar,
which may be detrimental to the reliability of the result. Indeed,
if we change the height recorded by a UAV's inertial sensor by a few
metres, a 'mirror solution' will be obtained. This statement can be
drawn from an analysis of the (reduced) rigidity matrix, see \cite{anderson2010formal}.

\begin{figure}
\begin{centering}
\includegraphics[width=0.42\textwidth]{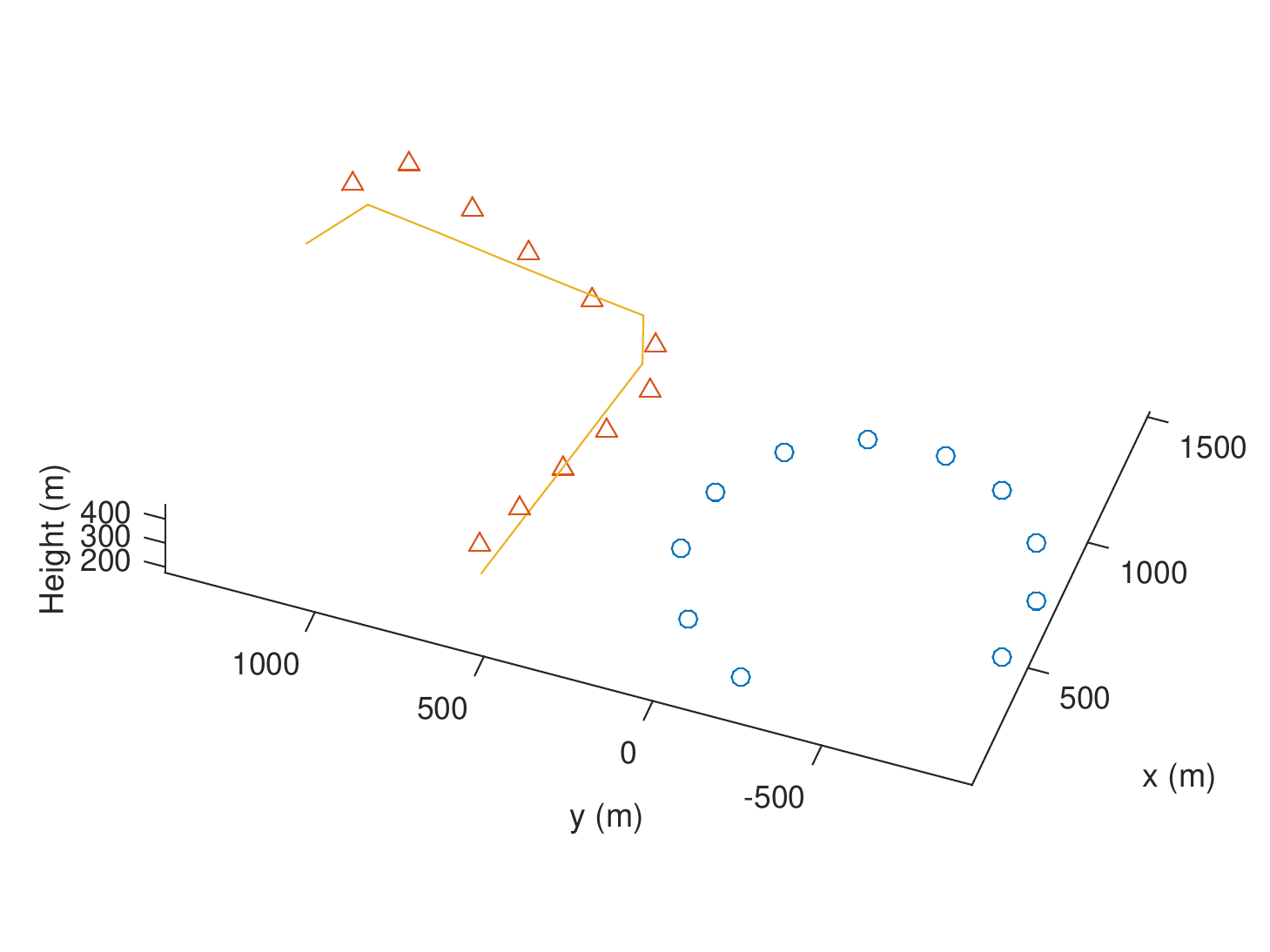} 
\par\end{centering}
\caption{3D plot of the UAV positions in global coordinate systems}
\label{fig::real} 
\end{figure}

\section{Conclusions}

In this paper, we first proposed a novel semidefinite optimization
approach for solving the problem of 3D mobile localization of a GPS-denied agent using distance-only
measurements. After that, a maximum likelihood estimator is used in
a further approach to enhance the accuracy of localization, with simulations
using real field test data. 

Future work includes introducing systematic treatment for the multi-agent case using similar procedure with this paper.
We are also involved in a separate study of localization of GPS-denied
agents using bearing-only (azimuth and elevation) measurements. 

\section*{Appendix}
The 10 constraints are listed as below. Let $C_{i}(\Theta)$ be the $i$th constraint and
we have 
\begin{equation}
\begin{split}C_{1} & =\theta_{1}^{2}+\theta_{2}^{2}+\theta_{3}^{2}-1=0\\
C_{2} & =\theta_{4}^{2}+\theta_{5}^{2}+\theta_{6}^{2}-1=0\\
C_{3} & =\theta_{7}^{2}+\theta_{8}^{2}+\theta_{9}^{2}-1=0\\
C_{4} & =\theta_{1}^{2}+\theta_{4}^{2}+\theta_{7}^{2}-1=0\\
C_{5} & =\theta_{2}^{2}+\theta_{5}^{2}+\theta_{8}^{2}-1=0\\
C_{6} & =\theta_{1}\theta_{2}+\theta_{4}\theta_{5}+\theta_{7}\theta_{8}=0\\
C_{7} & =\theta_{1}\theta_{10}+\theta_{4}\theta_{11}+\theta_{7}\theta_{12}-\theta_{13}=0\\
C_{8} & =\theta_{2}\theta_{10}+\theta_{5}\theta_{11}+\theta_{8}\theta_{12}-\theta_{14}=0\\
C_{9} & =\theta_{10}^{2}+\theta_{11}^{2}+\theta_{12}^{2}-\theta_{16}=0\\
C_{10} & =\theta_{13}^{2}+\theta_{14}^{2}+\theta_{15}^{2}-\theta_{16}=0
\end{split}
\label{eq::constraints}
\end{equation}

\textcolor{black}{Note there are 10 independent equality constraints
and 16 independent variables, so the problem has 6 degrees of freedom.
That is consistent with the fact that each of the rotation matrix
$R$ and the translation matrix $T$ has three degrees of freedom.
One should also note that the equation set used to express those constraints
is not unique. In fact, there are 4 other constraints being dropped
here, which can be derived from $C_{i}=0,i=1,\cdots,10$. The additional
constraints are}

\textcolor{black}{
\[
\begin{split}C_{11} & =\theta_{3}^{2}+\theta_{6}^{2}+\theta_{9}^{2}-1=0\\
C_{12} & =\theta_{1}\theta_{3}+\theta_{4}\theta_{6}+\theta_{7}\theta_{9}=0\\
C_{13} & =\theta_{2}\theta_{3}+\theta_{5}\theta_{6}+\theta_{8}\theta_{9}=0\\
C_{14} & =\theta_{3}\theta_{10}+\theta_{6}\theta_{11}+\theta_{9}\theta_{12}-\theta_{15}=0
\end{split}
\]
}

\bibliographystyle{plain} 
\bibliography{reference}

\begin{thebibliography}{10}

\bibitem{anderson2010formal}
Brian~DO Anderson, Iman Shames, Guoqiang Mao, and Baris Fidan.
\newblock Formal theory of noisy sensor network localization.
\newblock {\em Journal on Discrete Mathematics}, 24(2):684--698, 2010.

\bibitem{cao2008reaching}
Ming Cao, A~Stephen Morse, and Brian~DO Anderson.
\newblock Reaching a consensus in a dynamically changing environment: a
  graphical approach.
\newblock {\em SIAM Journal on Control and Optimization}, 47(2):575--600, 2008.

\bibitem{cao2011formation}
Ming Cao, Changbin Yu, and Brian~DO Anderson.
\newblock Formation control using range-only measurements.
\newblock {\em Automatica}, 47(4):776--781, 2011.

\bibitem{chepuri2014rigid}
Sundeep~Prabhakar Chepuri, Geert Leus, and Alle-Jan van~der Veen.
\newblock Rigid body localization using sensor networks.
\newblock {\em IEEE Transactions on Signal Processing}, 62(18):4911--4924,
  2014.

\bibitem{Chung2005coupler}
Wen-Yeuan Chung.
\newblock The characteristics of a coupler curve.
\newblock {\em Mechanism and Machine Theory}, 40(10):1099 -- 1106, 2005.

\bibitem{connelly2013generic}
Robert Connelly, Tibor Jord{\'a}n, and Walter Whiteley.
\newblock Generic global rigidity of body--bar frameworks.
\newblock {\em Journal of Combinatorial Theory, Series B}, 103(6):689--705,
  2013.

\bibitem{dietmaier1998stewart}
Peter Dietmaier.
\newblock The stewart-gough platform of general geometry can have 40 real
  postures.
\newblock In {\em Advances in Robot Kinematics: Analysis and Control}, pages
  7--16. Springer, 1998.

\bibitem{gower2004procrustes}
John~C Gower and Garmt~B Dijksterhuis.
\newblock {\em Procrustes problems}, volume~3.
\newblock Oxford University Press Oxford, 2004.

\bibitem{helmke1994optimization}
Uwe Helmke, John~B Moore, and Wrzburg Germany.
\newblock {\em Optimization and dynamical systems}.
\newblock Springer, 1994.

\bibitem{hopcroft2012computer}
John Hopcroft and Ravi Kannan.
\newblock Computer science theory for the information age.
\newblock 2012.

\bibitem{jiang2015simultaneous}
B.~Jiang, M.~Deghat, and B.~Anderson.
\newblock Simultaneous velocity and position estimation via distance-only
  measurements with application to multi-agent system control.
\newblock {\em IEEE Transactions on Automatic Control}, PP(99):1--1, 2016.

\bibitem{jiang2013translational}
Bomin Jiang, M.~Deghat, and B.D.O. Anderson.
\newblock Translational velocity consensus using distance-only measurements.
\newblock In {\em IEEE Conference on Decision and Control}, pages 2746--2751,
  Dec 2013.

\bibitem{chapter9}
Usman Mansoor and Habib~M. Ammari.
\newblock Localization in three-dimensional wireless sensor networks.
\newblock In Habib~M. Ammari, editor, {\em The Art of Wireless Sensor
  Networks}, chapter~9, pages 325--363. Springer, 2014.

\bibitem{mao2007wireless}
Guoqiang Mao, Bar{\i}{\c{s}} Fidan, and Brian~DO Anderson.
\newblock Wireless sensor network localization techniques.
\newblock {\em Computer networks}, 51(10):2529--2553, 2007.

\bibitem{motevallian2013new}
S.A. Motevallian, Lu~Xia, and B.D.O. Anderson.
\newblock A new splitting-merging paradigm for distributed localization in
  wireless sensor networks.
\newblock In {\em IEEE International Conference on Communications}, pages
  1454--1458, June 2013.

\bibitem{nagata2002multi}
Takeshi Nagata and Hiroshi Sasaki.
\newblock A multi-agent approach to power system restoration.
\newblock {\em IEEE Transactions on Power Systems}, 17(2):457--462, 2002.

\bibitem{you2017underwater}
Keyou~You Qizhu~Chen and Shiji Song.
\newblock Cooperative localization for autonomous underwater vehicles using
  parallel projection.
\newblock In {\em 13th IEEE International Conference on Control and Automation,
  to appear}, 2017.

\bibitem{shamsi2010sensor}
Davood Shamsi, Nicole Taheri, Zhisu Zhu, and Yinyu Ye.
\newblock On sensor network localization using sdp relaxation.
\newblock {\em arXiv preprint arXiv:1010.2262}, 2010.

\bibitem{stewart1966plantform}
D.~Stewart.
\newblock A platform with six degrees of freedom.
\newblock {\em Aircraft Engineering and Aerospace Technology}, 38(4):30--35,
  1966.

\bibitem{tay1984recent}
Tiong-Seng Tay and Walter Whiteley.
\newblock Recent advances in the generic ridigity of structures.
\newblock {\em Structural Topology, 1984, n{\'u}m. 9}, 1984.

\bibitem{trawny20073d}
Nikolas Trawny, Xun~S Zhou, Ke~X Zhou, and Stergios~I Roumeliotis.
\newblock 3d relative pose estimation from distance-only measurements.
\newblock In {\em IEEE/RSJ International Conference on Intelligent Robots and
  Systems}, pages 1071--1078. IEEE, 2007.

\bibitem{yu2006principles}
Changbin Yu, Bar{\i}s Fidan, and Brian~DO Anderson.
\newblock Principles to control autonomous formation merging.
\newblock In {\em Proceedings of the American Control Conference}, pages 7--pp.
  IEEE, 2006.

\end{thebibliography}

\end{document}